\pgfplotsset{compat=1.15}
\newtheorem{thm}{Theorem}
\newtheorem{lemma}[thm]{Lemma}
\newtheorem{corollary}[thm]{Corollary}
\theoremstyle{definition}
\newtheorem*{definition*}{Definition}
\newtheorem*{acknowledgement}{Acknowledgments}
\newcommand{\eps}{\varepsilon}
\newcommand{\be}{\mathbf{e}}
\newcommand{\tB}{\widetilde{B}}
\newcommand{\LOC}{\textsf{LOCAL}\;}
\DeclareMathOperator{\ad}{ad}
\DeclareMathOperator{\mad}{mad}
\renewcommand{\le}{\leqslant}
\renewcommand{\ge}{\geqslant}
\renewcommand{\epsilon}{\varepsilon}
\title{Distributed algorithms for fractional coloring}
\author[N. Bousquet]{Nicolas Bousquet}
\address{Laboratoire LIRIS (UMR 5205), CNRS, Université Claude Bernard Lyon 1, Univ. Lyon, France.}
\email{nicolas.bousquet@univ-lyon1.fr}
\author[L. Esperet]{Louis Esperet}
\address{Laboratoire G-SCOP, CNRS, Univ. Grenoble Alpes, Grenoble, France.}
\email{louis.esperet@grenoble-inp.fr}
\author[F.\ Pirot]{Fran\c{c}ois Pirot}
\address{Laboratoire G-SCOP, CNRS, Univ. Grenoble Alpes, Grenoble, France.}
\email{francois.pirot@grenoble-inp.fr}
\thanks{N.\ Bousquet, L.\ Esperet and F. Pirot are supported by ANR Projects GATO
(\textsc{ANR-16-CE40-0009-01}) and GrR (\textsc{ANR-18-CE40-0032}).} 
\begin{document}

\begin{abstract}
In this paper we study fractional coloring from the angle of
distributed computing. Fractional coloring is the linear relaxation of
the classical notion of coloring, and has many applications, in
particular in scheduling. It was proved by Hasemann, Hirvonen, Rybicki and
Suomela~\cite{HHRS16} that for every real $\alpha>1$
and integer $\Delta$, a fractional coloring of total weight at most $\alpha(\Delta+1)$
can be
obtained deterministically in a single round in graphs of maximum degree $\Delta$, in the \textsf{LOCAL}
model of computation.
However, a major issue of this result is that the output of each vertex has unbounded size.
Here we prove that even if we impose the more
realistic assumption that the output of each vertex has constant size,
we can find fractional colorings of total weight arbitrarily close to
known tight bounds for the fractional chromatic number in several
cases of interest.
More
precisely, we show that for any fixed $\eps > 0$ and $\Delta$,  a
fractional coloring of total weight at most 
  $\Delta+\eps$ can be
  found in $O(\log^*n)$ rounds in graphs of maximum degree $\Delta$
  with no $K_{\Delta+1}$, while finding a fractional coloring of total weight
  at most $\Delta$ in this case requires $\Omega(\log \log n)$ rounds for randomized
  algorithms and $\Omega( \log n)$ rounds for deterministic
  algorithms. We also show how to obtain fractional colorings of total
  weight at most $2+\epsilon$ in grids
  of any fixed dimension, for any
  $\epsilon>0$, in $O(\log^*n)$ rounds. Finally, we prove that in sparse
  graphs of large girth from any proper minor-closed family we can
  find a fractional coloring of total weight at most $2+\epsilon$, for any
  $\epsilon>0$, in $O(\log n)$ rounds.
\end{abstract}
\maketitle

\section{Introduction}

A \emph{(proper) $k$-coloring} of a graph $G$ is an assignment of colors to the
vertices of $G$, such that adjacent vertices receive different colors. This
is the same as a partition of the vertices of $G$ into (or covering of the vertices of $G$ by)
$k$ independent sets.
This has many applications in physical networks; for instance most scheduling problems can be expressed by a coloring problem in the underlying graph. 
When the resources at play inside the network are fractionable, it is more relevant to consider the linear relaxation
of this problem, where one wants to assign weights $x_S\in [0,1]$ to
the independent sets $S$ of $G$, so that for each vertex $v$ of $G$,
the sum of the weights $x_S$ of the independent sets $S$ containing
$v$ is at least $1$, and the objective is to minimize the sum of the
weights $x_S$. The solution of this linear program is the
\emph{fractional chromatic number} of $G$, denoted by $\chi_f(G)$. The
definition shows that $\chi_f$ is rational and that
$\omega(G)\le \chi_f(G)\le \chi(G)$ for any graph $G$ where
$\omega(G)$ denotes the clique number of $G$ (maximum size of a set of
pairwise adjacent vertices), and $\chi(G)$ denotes
the usual chromatic number (minimum $k$ such that $G$ admits a proper
$k$-coloring). A polyhedral
definition of $\chi_f$, which is not difficult to derive from the
definition above, is that $\chi_f(G)$ is
the minimum $x$ such that there is a probability distribution on the
independent sets of $G$, such that each vertex appears in a random
independent set (drawn from this probability distribution) with
probability at least $\tfrac1x$.

\smallskip

In this paper, we study fractional coloring from the angle of
distributed algorithms. In this context,  each
vertex outputs its ``part'' of the solution, and in a Locally
Checkable Labelling (whose precise definition will be given in Section~\ref{sec:prel}) this part should be of constant
size. For instance, in a distributed algorithm for proper
$k$-coloring of $G$, each vertex can output its color (an integer in $[k]=\{1,\ldots,k\}$), and
in a distributed algorithm for maximal independent set,
each vertex can output a bit saying whether it belongs to the independent
set. In both cases the fact
that the solution is correct can then be checked locally, in the sense
that adjacent vertices only need to compare their outputs, and if
there is no local conflict then the global solution is correct.
For more details about the distributed aspects of graph coloring,
the reader is referred to the book~\cite{BE13}. 

\smallskip

Looking back at the polyhedral definition of fractional
coloring introduced above, a first possibility would be to design a randomized distributed algorithm producing a
(random) independent set, in which each vertex has a large probability
to be selected (in this case, the output of each vertex is a single
bit, telling whether it belongs to the chosen independent set). A
classical algorithm in this vein is the following \cite{AS,BHR18}: Each vertex is assigned a
random identifier, and joins the independent set if its identifier is smaller than
that of all its neighbors. This clearly
produces an independent set, and it is not difficult to prove that each
vertex $v$ is selected with probability at least $\tfrac1{d(v)+1}$, so in particular this 1-round randomized algorithm witnesses
the fact that the fractional chromatic number of graphs of maximum
degree $\Delta$ is at most $\Delta+1$. Note that \emph{factor-of-IID} algorithms
for independent sets introduced in the past years are of this form
(see for instance~\cite{CGHV15,CHV18}).
This leaves the question of
how to produce a \emph{deterministic} distributed algorithm for
fractional coloring. Recall that a fractional coloring is a distribution of
independent sets, so the first issue is to decide what the output of
the algorithm should be in order to be locally checkable.
A solution explored in~\cite{HHRS16} is to assign
to each independent set $S$ of $G$ an interval $I_S\subset \mathbb{R}$
of length $x_S$ (or a finite union of intervals of total length
$x_S$), where $x_S$ is the weight defined above in the linear programming definition of
fractional coloring, such that all $I_S$ are pairwise disjoint. The
output of each vertex $v$ is then the union of all subsets $I_S\subset
\mathbb{R}$ such
that $v\in S$. Each vertex can check that its
output (which is a finite union of intervals) has total length at least 1, and pairs of adjacent vertices can check that
their outputs are disjoint, so the fact that this is a fractional
coloring can be checked locally. If the set of identifiers of the vertices of
$G$ is
known to all the vertices in advance (for instance if there are $n$ vertices and
the identifiers are $\{1,\ldots,n\}$), then by enumerating all
permutations of the identifiers in some canonical order, it is not difficult to transform the 1-round
randomized algorithm described above into a 1-round deterministic
algorithm producing such an output and with
total weight at most $\Delta+1$ (by running it for all permutations and
aggregating all the solutions).
The main result of~\cite{HHRS16} is a
1-round deterministic algorithm producing such an output and with
total weight at most $\alpha(\Delta+1)$ (for any $\alpha>1$), when the set of identifiers is not known in advance. As observed in~\cite{HHRS16}, the
unbounded size of the output implies that this algorithm is unusable
in practice. In this paper, we explore a different way to design
deterministic distributed algorithms producing fractional colorings of
small total weight.

\medskip

Given two integers $p \ge q \ge 1$, a \emph{$(p\!:\!q)$-coloring} of a
graph $G$ is an assignment of $q$-element subsets of $[p]$ to the
vertices of $G$, such that the sets assigned to any two adjacent
vertices are disjoint. An alternative view is that a
$(p\!:\!q)$-coloring of $G$ is precisely a homomorphism from $G$ to
the \emph{Kneser graph} $\mathrm{KG}(p,q)$, which is the graph whose vertices
are the $q$-element subsets of $[p]$, and in which two vertices are
adjacent if the corresponding subsets of $[p]$ are disjoint. The \emph{weight} of a $(p\!:\!q)$-coloring $c$ is $w(c) = p/q$.

\smallskip

The fractional chromatic number $\chi_f(G)$ can be equivalently
defined as the infimum of $\{\tfrac{p}{q}\,|\, G \mbox{ has a }
(p\!:\!q)\mbox{-coloring}\}$~\cite{SU13} (as before, it can be proved that this
infimum is indeed a minimum). Observe that a
$(p\!:\! 1)$-coloring is a (proper) $p$-coloring. It is well known that
the Kneser graph $\mathrm{KG}(p,q)$ has fractional chromatic number
$\tfrac{p}q$, while Lov\'asz famously proved~\cite{L78} that its
chromatic number is $p-2q+2$ using topological methods. This shows in
particular that $\chi_f(G)$ and $\chi(G)$ can be arbitrarily far apart.

This definition of the fractional chromatic number gives a natural way
to produce distributed fractional colorings, while keeping the output
of each vertex bounded. It suffices to fix the integer $q\ge 1$, and ask
for the smallest integer $p$ such that $G$ has a
$(p\!:\!q)$-coloring $c$; then the output of each vertex is the
sequence of its $q$ colors from $[p]$, which can be encoded in a
bit-string with at most $q\log
p = q(\log q + \log w(c))$ bits (in the remainder of the paper,
the \emph{output size} always refers to the number of bits in this
string, and $\log$ stands for the binary logarithm). The requirement that the output of each vertex has bounded size
is quite constraining in the case of fractional colourings, since in general the smallest
integers $p$ and $q$ such that $\chi_f(G)=\tfrac{p}q$  can be exponential in $|V(G)|$
\cite{Fis95}.


\medskip

In addition to the classical applications of fractional coloring in
scheduling (see~\cite{HHRS16}), there is another more theoretical motivation for
studying this problem (or the relaxation above where $q$ is fixed). In
 $n$-vertex graphs of maximum degree $\Delta$ (which is assumed to be a constant),
 a coloring with $\Delta+1$ colors can be found in
$O(\log^*n)$ rounds~\cite{GPS88,L92} in the \textsf{LOCAL} model of
computation (which will be introduced formally below). On the other
hand, Brooks' Theorem says that if $\Delta\ge 3$, any graph of maximum
degree $\Delta$ with no
clique $K_{\Delta+1}$ is $\Delta$-colorable~\cite{Bro41}, and finding such a
coloring has proved to be an interesting problem of
\emph{intermediate} complexity in distributed computing. It was proved
that the round complexity for computing such a coloring is $\Omega(\log
  \log n)$ for randomized algorithms~\cite{BFHKLRSU16} and $\Omega(\log n)$ for
deterministic algorithms~\cite{CKP19} (see also~\cite{Bra19}). Thus a large complexity gap
appears between $\Delta$ and $\Delta+1$ colors, and since the values
are integral it is all that can be said about this problem. However,
if the number of colors is real, or rational, the precise location of the complexity threshold
in the interval $[\Delta,\Delta+1]$ can be investigated. In Section~\ref{sec:mdeg}, we will
  show that for fractional coloring, the complexity threshold is
  arbitrarily close to $\Delta$; namely finding a fractional
  $\Delta$-coloring is as difficult as finding a $\Delta$-coloring, but
  for any fixed $\eps > 0$ and $\Delta$,  a fractional
  $(\Delta+\eps)$-coloring with output size of
  $O\left(\frac{1}{\eps}\log \frac{\Delta}{\eps}\right)$ bits per
  vertex can be
  found in $O_\epsilon(\log^*n)$ rounds in graphs of maximum degree $\Delta$
  with no $K_{\Delta+1}$. Here, the subscript $\epsilon$ in the big-$O$
  notation indicates that the implicit multiplicative constant depends
  on $\epsilon$.
  
\begin{restatable}{thm}{thmdeltafrac}\label{thm:deltafrac}
For any integer $q\ge 1$, and any $n$-vertex graph $G$ of maximum
degree $\Delta\ge 3$, without $K_{\Delta+1}$, a
$(q\Delta+1\!:\!q)$-coloring of $G$ can be computed in
$O(q^3\Delta^{2q}+q\log^*n)$ rounds deterministically  in the \LOC
model. 
\end{restatable}


  There are other similar complexity thresholds in distributed graph
  coloring. For instance, it was proved that $D$-dimensional
  grids\footnote{We note that these results are proved for toroidal
    grids with a consistent orientation, while
    Theorem~\ref{thm:gridfrac} considers classical, non-oriented grids.} can
  be colored with 4 colors in $O(\log^*n)$ rounds, while computing a 3-coloring in a
  2-dimensional $n\times n$-grid takes $\Omega(n)$
  rounds~\cite{BHKLOPRS17} (see also \cite{HSW17} for related results).
For (almost) vertex-transitive graphs like grids finding minimum fractional colorings
is essentially equivalent to finding maximum independent sets, and
simple local randomized algorithms approaching the optimal independent
set in grids can be used to produce fractional
$(2+\epsilon)$-colorings with small output
(see for instance~\cite{CHV18}).
  In
  Section~\ref{sec:grid}, we will show that for any fixed $\eps > 0$ and $D\ge
  1$, a
  fractional $(2+\eps)$-coloring  of the $D$-dimensional grid $G(n,D)$ of
  dimension $n\times \cdots \times n$ with output size of $O(\tfrac{6^D}{\eps}\log(\tfrac{6^D}{\eps} ))$
  bits per vertex can
  be computed \emph{deterministically} in $O_{\eps,D}(\log^* n)$ rounds, while it can be easily observed
  that finding a $(2q\!:\! q)$-coloring takes $\Omega(n)$ rounds (even
  if $D=1$, i.e., when the graph is a path). 
  
  \begin{restatable}{thm}{thmgridfrac}\label{thm:gridfrac}
    For every integers $D\ge 1$ and $q\ge 1$, a
$(2q+4\cdot 6^D\!:\! q)$-coloring of the $D$-dimensional grid $G(n,D)$ can
be found in $O\big(D\ell(2\ell)^D+D\ell\log^*n\big)$ rounds deterministically  in the \LOC
model, where $\ell = q+2\cdot 6^D$.
\end{restatable}


The last observation
  implies in particular that  $(2q\!:\! q)$-coloring trees takes $\Omega(n)$
  rounds. On the other hand, trees can be colored with 3 colors in
  $O(\log n)$ rounds and this is best possible (even with 3 replaced
  by an arbitrary number of colors)~\cite{GPS88,L92}. The
  \emph{maximum average degree} of a graph $G$, denoted by $\mad(G)$,
  is the maximum of the average degrees of all the subgraphs $H$ of $G$.  In
  Section~\ref{sec:sparse} we prove that, for every $\eps>0$, graphs of maximum average
  degree at most $2+\eps/40$ and large girth can be  $(2+\eps)$-colored in
  $O_\eps(\log n)$ rounds with output size of $O(\tfrac1{\eps}\log
  \tfrac1{\eps})$ bits per vertex. 

 \begin{restatable}{thm}{thmdistmad}\label{thm:distrmad2}
Let $G$ be an $n$-vertex graph with girth at least $2q+2$, and $\mad(G)\le 2+\tfrac1{40q}$, for some
  fixed $q\ge 1$. Then a $(2q+1\!:\!q)$-coloring of $G$ can be
  computed deterministically  in
  $O(q\log n+q^2)$ rounds in the \LOC
model.
\end{restatable}


  This implies that trees, and more generally
  graphs of sufficiently large girth from any minor-closed class can be  $(2+\eps)$-colored in
  $O(\log n)$ rounds, for any $\eps>0$. Note that the assumption that
  the girth is large cannot be avoided, as a cycle of length $2q-1$
  has fractional chromatic number equal to
  $2+\tfrac1{q-1}>2+\tfrac1q$.





  We conjecture that more generally, graphs of girth $\Omega(q)$ and
  maximum average degree $k+O(1/q)$ (where $k\ge 2$ and $q$ are integers) have a $(kq+1\!:\!q)$-coloring
   that can be computed efficiently by a  deterministic algorithm in
   the \LOC model (the fact that such a coloring exists is a simple
   consequence of~\cite{NS19} but the proof there uses flows and does not seem
   to be efficiently implementable in the \LOC model).

  \section{The \LOC model of computation}\label{sec:prel}

 All our results are
  proved in the \LOC model, introduced by Linial~\cite{L92}. We
  consider a network, in the form of an $n$-vertex graph $G$ whose vertices have
  unbounded computational power, and whose edges are communication links
  between the corresponding vertices. We are given a combinatorial
  problem that we need to solve in the graph $G$. In the
  case of deterministic algorithms, each
  vertex of $G$ starts with an arbitrary unique identifier (an integer between
  1 and $n^c$, for some constant $c\ge 1$, such that all integers
  assigned to the vertices are distinct). For randomized algorithms, each vertex
  starts instead with a collection of (private) random bits. The vertices then exchange messages
  (of unbounded size)
  with their neighbors in synchronous rounds, and after a fixed number
  of rounds (the \emph{round complexity of the algorithm}), each
    vertex outputs its local ``part'' of the global solution of the
    problem. This could for instance be the color of the vertex in a
    proper $k$-coloring. In Locally Checkable Labelling (\textsf{LCL})
    problems, this output has to be of constant size, and should be
    checkable locally, in the sense that the solution is correct
    globally if
    and only if it is correct in
    all neighborhoods of some (constant) radius. \textsf{LCL} problems include
    problems like $k$-coloring (with constant $k$), or maximal
    independent set, but not maximum independent set (for instance),
    and are central in the field of distributed algorithms.

    It turns out that with the assumption that
    messages have unbounded size, vertices can just send to their
    neighbors at each round all the information that they have
    received so far, and in $t$ rounds each vertex $v$ ``knows'' its
    neighborhood $B_t(v)$ at distance $t$ (the set of all vertices at
    distance at most $t$ from $v$). More specifically $v$ knows the
    labelled subgraph of $G$ induced by $B_t(v)$ (where the labels are
    the identifiers of the vertices), and nothing more, and the
    output of $v$ is based solely on this information.

    The goal is to minimize the round complexity. Since in $t$ rounds
    each vertex sees its neighborhood at distance $t$, after a
    number of rounds equal to the diameter of $G$, each vertex
    sees the entire graph. Since each vertex has
    unbounded computational power, a distinguished vertex (the vertex
    with the smallest identifier, say) can
    compute an optimal solution of the problem and communicate this
    solution to all the vertices of the graph. This shows that any
    problem can be solved in a number of rounds equal to the diameter
    of the graph, which is at most $n$ when $G$ is connected.  The
    goal is to obtain algorithms that are significantly more
    efficient, i.e., of round complexity $O(\log n)$, or even
    $O(\log^* n)$, where $\log^* n$ is the number of times we have to
    iterate the logarithm, starting with $n$, to reach a value in $(0,1]$.






\section{Maximum degree}\label{sec:mdeg}

In this section we will need the following consequence of a result of Aubry,
Godin and Togni~\cite[Corollary 8]{AGT14} (see
also~\cite{CGHHJ00}).


\begin{thm}[\cite{AGT14}]\label{thm:pathchoos}
Let $q\ge 1$ be an integer and let $P=v_1,v_2,\ldots,v_{2q+1}$ be a
path. Assume that for $i\in\{1,2q+1\}$ the vertex $v_i$ has a list
$L(v_i)$ of at least $q+1$ colors, and for any $2\le i \le 2q$, $v_i$
has a list $L(v_i)$ of at least $2q+1$ colors. Then each vertex $v_i$ of $P$ can be assigned a
subset $S_i\subseteq L(v_i)$ of $q$ colors, so that adjacent vertices
are assigned disjoint sets.
\end{thm}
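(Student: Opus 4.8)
First I would reduce to minimum list sizes: shrinking a list never helps (any assignment valid for sublists $L'(v_i)\subseteq L(v_i)$ of the prescribed sizes is valid for the $L(v_i)$), so we may assume $|L(v_1)|=|L(v_{2q+1})|=q+1$ and $|L(v_i)|=2q+1$ for $2\le i\le 2q$. Next, note that once suitable interior sets $S_2,\dots,S_{2q}$ have been found, the end vertices are trivial: $S_1$ can be taken to be any $q$-subset of $L(v_1)\setminus S_2$, which has the required size exactly when $|S_2\cap L(v_1)|\le1$, and symmetrically at $v_{2q+1}$. So the task reduces to choosing pairwise-consecutive-disjoint $q$-sets $S_i\subseteq L(v_i)$ for $2\le i\le 2q$ subject to the two side conditions $|S_2\cap L(v_1)|\le1$ and $|S_{2q}\cap L(v_{2q+1})|\le1$. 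A plain left-to-right greedy handles every interior vertex (each pool $L(v_i)\setminus S_{i-1}$ has size at least $q+1$) but cannot on its own guarantee the side conditions — that is the whole difficulty.

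The plan is then to prove, by induction on the number of vertices $m$, a more flexible statement about a path $w_1,\dots,w_m$ whose interior lists have size $2q+1$ and each of whose two end vertices $w$ carries a ``forbidden set'' $F(w)$ inherited from outside the path together with a quota $t(w)\ge0$, the constraint at an end being the relaxed one $|S_w\cap F(w)|\le t(w)$ (so an end behaves like a vertex whose effective list can be as small as $q+1-t(w)$); the two quotas are required to satisfy $t(w_1)+t(w_m)\ge g(m)$ for a function $g$ that decreases as $m$ grows, normalised so that the case $m=2q+1$, $F\equiv\emptyset$, $t\equiv 0$ is exactly the theorem. The inductive step peels $w_1$ off the path: one picks a $q$-set $S_{w_1}\subseteq L(w_1)$ that uses as few colours of $F(w_1)$ as the quota permits and meets $L(w_2)$ as little as possible, and then applies the induction hypothesis to $w_2,\dots,w_m$ with $F(w_2):=S_{w_1}$ and an appropriately updated quota.

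The hard part will be the accounting in this reduction. Deleting $w_1$ shortens the path and hence relaxes the quota inequality by a fixed amount, but transferring the disjointness constraint onto $w_2$ can be expensive: if $L(w_1)\subseteq L(w_2)$ then $S_{w_1}$ is forced to lie inside $L(w_2)$ and removes $q$ of its colours. Reconciling these opposing effects needs a short case analysis on $|L(w_1)\cap L(w_2)|$, and it is precisely here that one must exploit the latitude in the choice of $S_{w_1}$ (there are $q+1$ admissible $q$-subsets of $L(w_1)$ when $|L(w_1)|=q+1$) to steer $S_{w_1}$ away from $L(w_2)$, or to absorb it into $F(w_2)$ harmlessly, whenever the overlap is large. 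Choosing $g$ so that the induction closes — in particular so that it is the value $m=2q+1$ that first allows both quotas to vanish — is the technical heart of the argument. (A more global alternative would be the colour-indexed reformulation: choose for each colour $c$ a set $X_c\subseteq\{\,i:c\in L(v_i)\,\}$ with no two consecutive indices so that each index lies in exactly $q$ of the $X_c$; the fractional relaxation is easy to solve since, apart from the covering equalities, every constraint is an interval constraint on the line $1,\dots,2q+1$, but rounding in the presence of the exact covering constraints again calls for a path-specific argument.)
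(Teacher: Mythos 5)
The paper does not prove this statement; it quotes it directly as Corollary~8 of \cite{AGT14} (with a pointer to \cite{CGHHJ00}), so there is no in-paper argument to compare against, and your attempt has to stand on its own. As written it is a proof plan rather than a proof. The preliminary reductions are fine (shrinking lists is harmless, and once $S_2,\dots,S_{2q}$ are chosen the endpoints only require $|S_2\cap L(v_1)|\le 1$ and $|S_{2q}\cap L(v_{2q+1})|\le 1$), but the inductive step --- the case analysis on $|L(w_1)\cap L(w_2)|$ and the choice of the function $g$ --- is where the entire content of the theorem lies, and you defer it explicitly, calling it ``the technical heart of the argument.'' The parenthetical LP/rounding alternative is deferred in exactly the same way.

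Beyond being incomplete, I do not think the induction can close as you have framed it. After peeling $w_1$ you set $F(w_2):=S_{w_1}$ and ask the inductive hypothesis for $|S_{w_2}\cap F(w_2)|\le t(w_2)$; but a valid coloring needs the genuine disjointness $S_{w_2}\cap S_{w_1}=\emptyset$, which forces $t(w_2)=0$. The end-quota sum therefore cannot increase when $m$ drops by one, whereas a $g$ that decreases as $m$ grows demands $g(m-1)>g(m)$; already on the base instance ($t\equiv 0$, $m=2q+1$) the first peeling step would require $t(w_2)+t(w_m)=0\ge g(2q)>0$. The underlying issue is that a scalar quota cannot record what actually matters, namely \emph{which} colours the accumulated sets $S_{w_1},S_{w_2},\dots$ are removing relative to the far-end list $L(v_{2q+1})$: choosing $S_{w_1}$ to meet $L(w_2)$ ``as little as possible'' gives no control whatsoever over its interaction with $L(v_{2q+1})$, and it is exactly that interaction which determines whether the last constraint $|S_{2q}\cap L(v_{2q+1})|\le 1$ can be met. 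Repairing this is therefore not a matter of filling in the deferred case analysis --- the inductive invariant itself would need to carry a set (or its relation to the far end), not just a number.
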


In a graph $G$, we say that a path $P$ is an \emph{induced path} if
the subgraph of $G$ induced by $V(P)$, the vertex set of $P$, is a
path. The \emph{length} of a path is its number of vertices. Note that shortest paths are induced paths, and in particular every connected
graph that has no induced path of length $k$ has diameter at most $k-2$.


We are now ready to prove Theorem~\ref{thm:deltafrac}, which we
restate here for the convenience of the reader.

\thmdeltafrac*


\begin{proof}
The first step is to construct the graph $H_{1}$, whose nodes
are all the induced paths of length $2q+1$ 
in $G$, with an edge between two nodes of $H_1$ if
the corresponding paths in $G$ share at least one vertex. So $H_1$ can
be seen as the intersection graph of the induced paths of length
$2q+1$ of $G$. Note that any communication in $H_{1}$ can
be emulated in $G$, by incurring a multiplicative factor of $O(q)$
on the round complexity. Note that $H_1$ has at most $n\cdot \Delta^{2q}$
vertices and maximum degree $O(q^2 \Delta^{2q})$ (given a path $P$, one has $O(q^2)$ possible choices for the position of the intersection vertex $x$ on $P$ and on an intersecting path, then at most $\Delta^{2q}$ possible ways of extending $x$ into such an intersecting path).
It follows that a maximal independent set
$S_1$ in $H_{1}$ can be computed in $O(\Delta(H_1)+\log^*(|V(H_1)|))=O(q^2 \Delta^{2q}+\log^*(\Delta^{2q} n))=O(q^2 \Delta^{2q}+\log^*n)$ rounds in $G_{1}$~\cite{BEK14},
and thus in $O(q^3 \Delta^{2q}+q\log^*n)$ rounds in $G$.

\smallskip

Observe that the set $S_1$ corresponds to a set of vertex-disjoint
induced paths of length $2q+1$ in $G$. Let $\mathcal{P}=\bigcup_{P\in
  S_1} V(P)$. By maximality of $S_1$, the graph $G-\mathcal{P}$ has no induced
path of length $2q+1$, and in particular each connected component $C$ of
$G-\mathcal{P}$ has diameter at most $2q-1$. Each such component $C$ has indeed a
$(q\Delta+1\!:\!q)$-coloring  $c$ (since $C$, as a subgraph of $G$, is $\Delta$-colorable, 
by Brooks' theorem), which can be computed in $O(q)$ rounds. Our next step
is to extend this coloring $c$ of $G-\mathcal{P}$ to $\mathcal{P}$. To
this end, define a graph $H_2$ whose nodes are the elements of $S_1$, in which two
nodes are adjacent if the two corresponding paths are adjacent in $G$
(i.e., some edge of $G$ has a vertex in each of these two paths). Observe
that $H_2$ has at most $n$ nodes and maximum degree $O(q \Delta)$.
So a proper coloring $c_2$ of $H_2$ with $t=O(q \Delta)$ colors
$1,2,\ldots,t$ can be found in $O(q\Delta+\log^*n)$ rounds in $H_2$~\cite{BEK14}
(and thus in $O(q^2\Delta+q\log^*n)$ rounds in $G$). For each color
$i$, consider the paths of $\mathcal{P}$ of color $i$ in
$c_2$. We extend the current partial coloring of $G$ to these paths
(which are pairwise non-adjacent by definition) using Theorem
\ref{thm:pathchoos}. For each of these paths, each vertex starts with
$q\Delta+1$ available colors and the coloring of the neighborhood of
this path forbids at most $q(\Delta-2)$ colors for each internal
vertex of the path, and at most $q(\Delta-1)$ colors for each
endpoint of the path. Thus each internal vertex of a path of
$\mathcal{P}$ has a list of $2q+1$ colors and each endpoint has a list
of $q+1$ colors, as required for the application of Theorem
\ref{thm:pathchoos}. The extension thus takes $t=O(q \Delta)$ steps,
each taking $O(q)$ rounds, and thus the final round complexity is $O(q^3 \Delta^{2q}+q\log^*n)$.
\end{proof}

We now prove that finding a $(q\Delta\!:\!q)$-coloring is
significantly harder. We will use a reduction from the \emph{sinkless
  orientation} problem: given a bipartite $n$-vertex $\Delta$-regular graph
$G$, with $\Delta\ge 3$ we have to find an
orientation of the edges of $G$ so that each vertex has at least one
outgoing edge. It was proved that in the \LOC model this takes $\Omega(\log \log n)$
rounds for a randomized algorithm~\cite{BFHKLRSU16} (see
also~\cite{CKP19}) and $\Omega(\log n)$ for a
deterministic algorithm~\cite{CKP19} (see also~\cite{Bra19}). Note
that the results in~\cite{BFHKLRSU16}  and~\cite{Bra19} are proved for
$\Delta=3$, while the results of~\cite{CKP19} are proved 
for any $\Delta\ge 3$.

\begin{figure}[htb]
 \centering
 \includegraphics{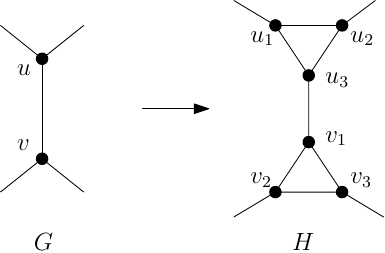}
 \caption{The construction of $H$ from $G$ in the proof of
   Theorem~\ref{thm:lbfrac}, for $\Delta=3$.}
 \label{fig:deltawy}
\end{figure}

\begin{thm}\label{thm:lbfrac}
For any integers $\Delta\ge 3$ and  $q\ge 1$, obtaining a $(q\Delta\!:\! q)$-coloring of an $n$-vertex
$\Delta$-regular graph with no $K_{\Delta+1}$ takes $\Omega(\log \log n)$
rounds for a randomized algorithm and $\Omega(\log n)$ rounds for a
deterministic algorithm in the \textsf{LOCAL} model.
\end{thm}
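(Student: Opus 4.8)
The plan is to reduce the sinkless orientation problem on bipartite 3-regular graphs to the problem of $(3q\!:\!q)$-coloring a 3-regular graph with no $K_4$. Given a bipartite 3-regular $n$-vertex instance $G$ of sinkless orientation, I would build an auxiliary 3-regular graph $H$ (as suggested by Figure~\ref{fig:deltawy}) by replacing each edge of $G$ by a small gadget, or by subdividing and attaching suitable constant-size gadgets, in such a way that $H$ remains 3-regular, contains no $K_4$, has $O(n)$ vertices, and any local computation in $H$ can be emulated in $G$ with only a constant multiplicative overhead in round complexity (and vice versa). The key design goal of the gadget is that in any $(3q\!:\!q)$-coloring of $H$, the local color pattern around the gadget corresponding to an edge $uv$ of $G$ encodes an orientation of $uv$; that is, the coloring forces a canonical "direction" on each edge.

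The heart of the argument is a \emph{rigidity} property of $(3q\!:\!q)$-colorings of the relevant gadgets, which should ultimately come from the fact that $\chi_f(\mathrm{KG}(3q,q)) = 3$ is tight: a $(3q\!:\!q)$-coloring is an optimal fractional coloring, so there is essentially no slack. Concretely, I expect the gadget to be built so that each vertex $v$ of $G$ becomes (or is attached to) a structure — for instance a triangle or an odd cycle through the three edge-gadgets at $v$ — whose $(3q\!:\!q)$-colorings partition the $3q$ colors into three $q$-sets with very limited freedom, thereby distinguishing one of the three incident edges as the "outgoing" one, or at least guaranteeing that not all three edges can be simultaneously "incoming." The $K_4$-freeness must be checked by hand on the gadget; using triangles rather than $K_4$'s is the natural way to keep $\omega(H)=3<4$ while still exploiting tightness of the fractional chromatic number.

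From a $(3q\!:\!q)$-coloring of $H$, each vertex would then locally decode an orientation of its incident edges of $G$, and I would argue that the decoded orientation is a valid sinkless orientation: no vertex of $G$ has all three edges incoming. This establishes that a $t$-round algorithm for $(3q\!:\!q)$-coloring (3-regular, $K_4$-free graphs) yields an $O(t)$-round algorithm for sinkless orientation on bipartite 3-regular graphs, so the lower bounds of $\Omega(\log\log n)$ (randomized) and $\Omega(\log n)$ (deterministic) from~\cite{BFHKLRSU16,Bra19} transfer directly. (The bipartiteness of the sinkless orientation instance is not an obstacle since I only need to \emph{build} $H$ from a bipartite $G$; I do not need $H$ bipartite.)

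The main obstacle I anticipate is designing the gadget so that the implication really goes through in both directions: I need (i) that \emph{every} $(3q\!:\!q)$-coloring of $H$ induces a sinkless orientation of $G$ — this is the rigidity/tightness part and is the delicate point — and simultaneously (ii) that $H$ is genuinely 3-regular and $K_4$-free with only $O(n)$ vertices and constant-radius locality, so the reduction is round-efficient. Getting a gadget that forces enough structure on optimal fractional colorings while staying sparse and triangle-bounded is exactly where the case analysis of $q$-subsets of $[3q]$ will be needed; I would handle the general $q$ uniformly by phrasing everything in terms of the three color classes of an optimal fractional coloring rather than manipulating individual colors.
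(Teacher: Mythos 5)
Your proposed reduction and construction match the paper's: replace each vertex $v$ of the bipartite $3$-regular graph $G$ by a triangle $x_v,y_v,z_v$ to obtain a $3$-regular, $K_4$-free graph $H$ on $3n$ vertices, and decode a sinkless orientation of $G$ from any $(3q\!:\!q)$-coloring of $H$. The ``rigidity'' step you flag as delicate is in fact immediate: the three $q$-sets on a triangle are pairwise disjoint and hence partition $[3q]$, so exactly one triangle vertex contains the fixed color $1$; orienting that vertex's external edge outward gives each vertex of $G$ an outgoing edge, and no edge is oriented both ways since its two endpoints in $H$ carry disjoint color sets.
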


\begin{proof}
Let $\mathcal{A}$ be a distributed algorithm which returns a $(q\Delta:q)$-coloring  of any $n$-vertex $\Delta$-regular graph $G$ with no $K_{\Delta+1}$ within $f(n)$ rounds, for every integer $q\ge 1$. Note that such an algorithm exists with $f(n)\le n$, since by Brook's theorem $G$ is $\Delta$-colorable, and we obtain a $(q\Delta:q)$-coloring of $G$ by exploding each color into $q$ copies in a $\Delta$-coloring of $G$ (that we find for free once each node has a complete knowledge of the graph $G$ within $n$ rounds). 

Consider a bipartite $\Delta$-regular $n$-vertex graph $G$, in which we want
to compute a sinkless orientation. Let $H$ be the $\Delta n$-vertex $\Delta$-regular graph obtained
from $G$ by replacing each vertex $v$ by a clique
$v_1,v_2,\ldots,v_\Delta$ on $\Delta$ vertices (see
Figure~\ref{fig:deltawy}). Note that $H$ does not contain any copy of
$K_{\Delta+1}$. 
We now apply the algorithm $\mathcal{A}$ on $H$ in order to compute a $(q\Delta:q)$-coloring $c$ of $H$.
Note that each copy of $K_\Delta$ in $H$ uses all $q\Delta$
colors, and thus the set $S$ of vertices of $H$ whose set of colors in $c$ 
contains the color 1 intersects each copy $v_1,\ldots,v_\Delta$ of
$K_\Delta$ in $H$ in
a single vertex (say $v_1$, up to renaming of the vertices of
$H$). Let $e_v$ be the edge of $H$ that contains $v_1$, but is
disjoint from $v_2,\ldots,v_\Delta$. Then for every vertex $v$ of $G$ we
orient $e_v$ from $v$ to the other endpoint of $e_v$ in $H$. Note that this
gives a partial orientation of $H$ (an edge cannot be oriented
in both directions, since otherwise the two endpoints contain color 1,
which is a contradiction), which can be transferred to a partial
orientation of $G$ by contracting each clique $v_1,\ldots,v_\Delta$ back to
the vertex $v$. Since $S$ intersects each copy  of $K_\Delta$ in $H$, the
resulting partial
orientation of $G$ is sinkless, as desired.

We have described a distributed algorithm which returns a sinkless orientation of any bipartite $\Delta$-regular graph $n$-vertex graph within $f(\Delta n)$ rounds, hence $f(n) = \Omega\left(\log \frac{n}{\Delta}\right)$ if $\mathcal{A}$ is deterministic, or $f(n) = \Omega\left(\log \log \frac{n}{\Delta}\right)$ if $\mathcal{A}$ is randomized.
\end{proof}

A natural question is whether one can find another description of
fractional coloring with bounded size certificates for which the
existence of a fractional coloring of total weight $\Delta$ can be
determined faster. The existence of such a fractional coloring implies
the existence of an independent set of weight at least $\frac 1
\Delta$ in $G$. It can be observed that the same proof as that of
Theorem~\ref{thm:lbfrac}, shows that finding such an independent set
in a graph of maximum $\Delta$ with no $K_{\Delta+1}$ is as hard as finding a sinkless orientation.

\section{Coloring the grid}\label{sec:grid}

Let $G(n,D)$ be the $D$-dimensional grid with vertex set $[n]^D$,
i.e., two distinct vertices $x=(x_1,\ldots,x_D)$ and
$y=(y_1,\ldots,y_D)$ in $[n]^D$ are adjacent in $G(n,D)$ if and only if
$d_1(x,y)=\sum_{i=1}^D |x_i- y_i|=1$ (where $d_1$ denotes the usual
taxicab distance). We assume that all the vertices know their identifier 
(but do not have access to their own coordinates in the grid). We note
that our results do not assume any knowledge of directions in the grid
(i.e., a consistent orientation of edges, such as South$\to$North and West$\to$East in 2 dimensions), in contrast with the
results of~\cite{BHKLOPRS17}.

Note that the distance between vertices in the grid coincides with the
taxicab distance $d_1$ in $\mathbb{R}^D$. 
In this section it will be convenient to work instead with the Chebyshev 
distance $d_\infty(x,y)=\max_{1\le i \le n}|x_i-y_i|$, since balls with respect to $d_\infty$ are grids themselves, as well as their (possibly empty) pairwise intersections.
We observe that any communication between two nodes at distance $d_\infty$ at most $\ell$ can be emulated within $D\cdot \ell$ rounds, hence working with $d_\infty$ rather than with $d_1$ does not have a significant impact in terms of round complexity.
In the remainder of this section, all distances refer to the $d_\infty$-distance.


We now prove  Theorem~\ref{thm:gridfrac}, restated here for convenience.

\thmgridfrac*


\begin{proof}
Let $G=G(n,D)$ and  $\ell = q+2\cdot 6^D$. We assume that $n\ge 2\ell$, for otherwise $G$ has diameter (with respect to $d_\infty$) at most
$O(\ell)$ and a desired coloring can be found in $O(D\ell)$ rounds. We start by finding an
inclusionwise maximal independent set $I$ of $G^{[\ell]}$, the graph obtained from $G$ by linking each pair of vertices at ($d_\infty$-)distance at most $\ell$ from each other, which are therefore linked by a path of length at most $D\ell$ in $G$. Note that the maximum degree of $G^{[\ell]}$ corresponds to the maximum size of a ball of radius $\ell$ in a $D$-dimensional grid, that is at most $(2\ell+1)^D$; therefore $I$ can be constructed in $O((2\ell+1)^D+\log^* n)$ rounds in $G^{[\ell]}$~\cite{BEK14}, and this can be emulated in $O\big(D\ell(2\ell)^D+D\ell\log^*n\big)$ rounds in $G$.


A given vertex $x$ of the grid can find the list $L_x$ of vertices of
coordinates $x+\be$ for every $\be \in \{-1,1\}^D$ within $D$ rounds (although $x$ is not aware of the absolute directions in the grid corresponding to each of these vertices).
This can be done with the following procedure. Given a vertex $y\in
V(G)$, we say that a vertex $z\in V(G)$ is a $1$-neighbour of $y$ if
$N(z)$ contains $y$, and for every $2\le i \le D$ we say that $z$  is an $i$-neighbour of $y$ if $N(z)$ contains at least two $(i-1)$-neighbours of $y$. Then $L_x$ is the list of vertices $y\in V(G)$ such that $x$ is a $D$-neighbour of $y$, and this list can be found within $D$ rounds.
Given $x$ and $y=x+\be \in L_x$, finding $z=x+2\be$ can be done in
$2D$ rounds, since $z$ is the vertex in $L_y$ furthest away from $x$
with respect to the $d_1$-distance in $G$ (and more generally $x+i\be$ can be found in
$iD$ rounds).

The next step of the coloring procedure is as follows. Every vertex
$x\in I$ chooses a direction $\be_x\in  \{-1,1\}^D$ in such a way that $x[i]\coloneqq x+i \cdot \be_x\in
[n]^D$ is well-defined for every $i\le \ell$ (this is possible since $n\ge 2\ell$, and
such a direction can be chosen in $\ell\cdot D$ rounds).

For any $x\in I$ and $1\le i \le \ell$, we define a set $B(x,i)$ as
follows. Each vertex $y$ considers the 
vertex $x\in I$ of smallest identifier such that $d_\infty(y,x[i])\le 2\ell$
and joins the set $B(x,i)$. Equivalently, $B(x,i)$ is the ball of
center $x[i]$ and radius $2\ell$, in which we remove all vertices at
distance at most $2\ell$ from $x'[i]$ for some $x'\in I$ of smaller identifier
than that of $x$. We also let $\tB(x,i)$ be obtained from $B(x,i)$ after removing all vertices at distance exactly $2\ell$ from $x[i]$. When some vertex $v$ is in $B(x,i)\setminus \tB(x,i)$, we say that $v$ is a \emph{boundary vertex for $x[i]$}. An example of such a partition of the grid is depicted in Figure~\ref{fig:grid-partition}.

Each vertex $v\in V(G)$ is at distance at most $\ell$ from at least one vertex $x\in I$ by maximality of $I$, and therefore at distance at most $2\ell$ from $x[i]$ for every $i\le \ell$. It follows that for every fixed $1\le i \le \ell$, the collection of the sets $B(x,i)$ over all $x\in I$ forms a partition of $V(G)$. 

We now show that for each $i$, no two distinct sets $\tB(x,i)$ and $\tB(y,i)$
(with $x,y\in I$) are connected by an edge. Indeed, assume for the
sake of contradiction that there is an edge $uv$ with $u\in \tB(x,i)$
and $v\in 
\tB(y,i)$, for two distinct vertices $x$ and $y$ in $I$ with
$\text{ID}(x)<\text{ID}(y)$, where $\text{ID}(x)$ denotes the
identifier of $x$. Then $d_\infty(x,u)\le 2\ell-1$ and thus
$d_\infty(x,v)\le 2\ell$ which contradicts the fact that $v\not\in
B(x,i)$ since $\text{ID}(x)<\text{ID}(y)$. This shows that no two distinct sets $\tB(x,i)$ and $\tB(y,i)$
(with $x,y\in I$) are connected by an edge, as desired. This implies
that all the components of the subgraph $G_i$ of $G$ induced by $\bigcup_{x\in
  I}\tB(x,i)$ (which is bipartite) have diameter
 $O(\ell)$, and in particular for every $1\le i\le \ell$, we can find a proper $2$-coloring $c_i$ of $G_i$
with colors in $\{2i-1,2i\}$ within $O(\ell)$ rounds. We now show that the union of these colorings over all $i\le \ell$ yields a $(2q+4\cdot 6^D:q)$-coloring $c$ of $G$.

The total number of colors is $2\ell = 2q+4\cdot 6^D$, so
it remains to show that each vertex $v\in V(G)$ is assigned at
least $q$ colors in $c$. This is equivalent to showing that each
vertex $v\in V(G)$ is a boundary vertex for $x[i]$ for at most $2\cdot
6^D$ different combinations of $x$ and $i$. If $v$ is a boundary
vertex for $x[i]$, then $x[i]$ lies on the boundary of the ball $B_v$
of center $v$ and radius $2\ell$. Note that every 
line intersects the boundary of a convex polytope in at most two points or in a
segment, and in the latter case the line is contained in the hyperplane defining a
facet. Since we have chosen the directions $\be_x \in
\{-1,1\}^D$ while balls in $d_\infty$ are grids (bounded by
axis-parallel hyperplanes), this shows that for every vertex $x\in I$, the set
of vertices $\{x[i] : 1\le i\le \ell\}$ intersects the boundary of
$B_v$ at most twice, and if the intersection is non-empty then $x$ is
at distance at most $3\ell$ from $v$. For a fixed vertex $v$ there can
be at most $6^D$ vertices in $N_{G^{[3\ell]}}(v) \cap I$. To see this,
for every $(i_1,\ldots,i_D) \in \{-3,-2,\ldots,2\}^D$, we let
$S_{(i_1,\ldots,i_D)}$ be the set of vertices $y \in V(G)$ of
coordinates $(y_1,\ldots,y_D)$ satisfying $v_j + i_j\cdot \ell \le y_j
\le v_j + (i_j+1)\cdot \ell$ for every $1\le j \le D$. It is
straightforward to see that the diameter of $S_{(i_1,\ldots,i_D)}$ is $\ell$, so it contains at most 1 element of $I$. Since moreover the collection $\big(S_{(i_1,\ldots,i_D)}\big)$ covers $N_{G^{[3\ell]}}(v)$, the results follows, which concludes the proof.
\end{proof}


\begin{figure}
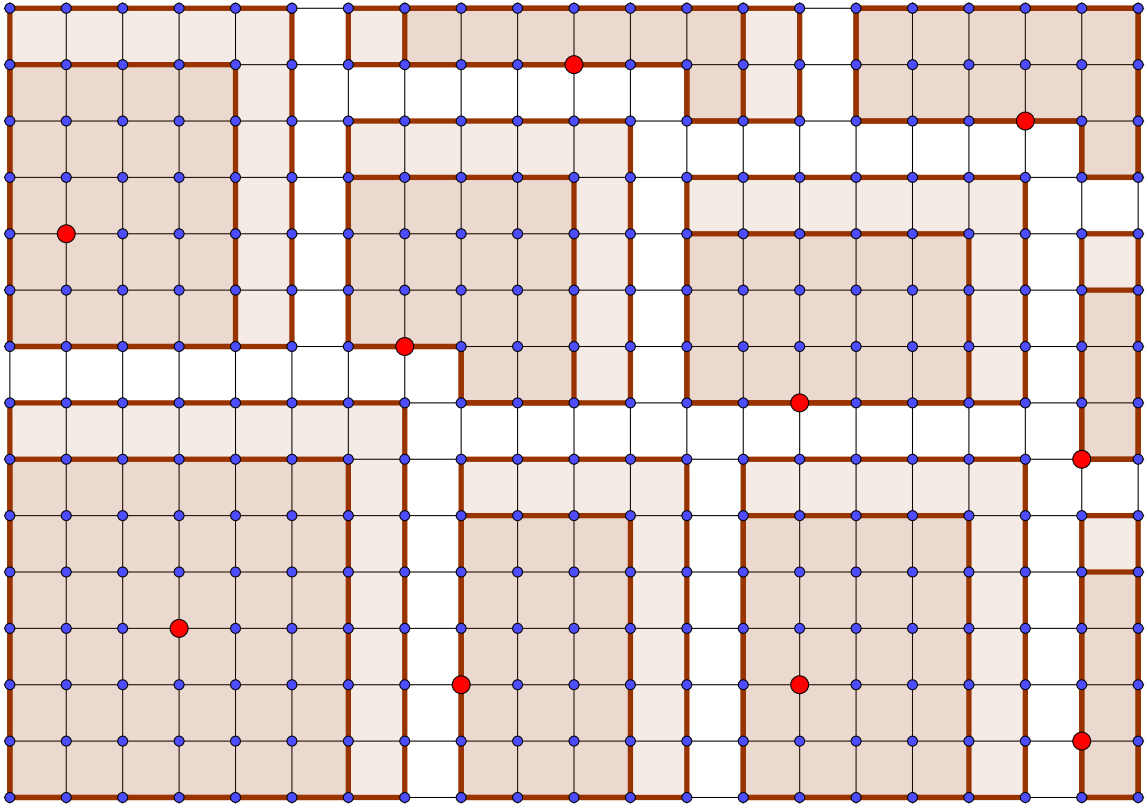

\include{grid_partition}
\caption{An example of the partition of the $2$-dimensional grid into the sets $B(x,i)$; here $2\ell=4$, and the labels of the vertices are increasing according to the lexicographical ordering of the coordinates. The regions containing the boundary vertices are lighter; the vertices $x[i]$ (for $x\in I$) are colored in red.}
\label{fig:grid-partition}
\end{figure}

\section{Sparse graphs}\label{sec:sparse}

The \emph{average degree} of a graph $G=(V,E)$, denoted by $\ad(G)$, is defined as the average of the
degrees of the vertices of $G$ (it is equal to 0 if $V$ is empty and to $2|E|/|V|$ otherwise).

In this section we are interested in graphs of  average degree
at most $2+\epsilon$, for some small $\epsilon>0$, and with no
connected component isomorphic to a short cycle (later we will need
the stronger property that all cycles in the graph are large). We first prove that
they contain a linear number of vertices that are either of degree at
most 1 or belong to long chains
of vertices of degree 2, and such a set can be found efficiently. Note
that the condition that no connected component is isomorphic to a
short cycle is necessary (a disjoint union of short cycles has average degree 2
but no vertex of degree 1 and no long chain of vertices of degree 2).

\begin{lemma}\label{lem:mad2}
  Let $G$ be an $n$-vertex graph with $\ad(G)\le 2+\tfrac1{40q}$ and
  without component isomorphic to a cycle of length less than $2q+2$, for some
  $q\ge 1$. Let $S$ be the set of vertices of degree at most 1 in $G$,
  and let $P$ be the set of vertices belonging to a path consisting of
  at least $2q+1$ vertices, all
  of degree 2 in $G$ (in particular each vertex of $P$ has degree
  2 in $G$). Then  $|S\cup P|\ge \tfrac1{40q}\, n$.
\end{lemma}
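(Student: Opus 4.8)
The plan is to argue by contradiction: suppose $|S\cup P|<\tfrac1{40q}n$ and deduce $\ad(G)>2+\tfrac1{40q}$. Write $t=|S\cup P|$ and partition $V(G)=S\cup V_2\cup V_{\ge3}$ according to whether a vertex has degree at most $1$, exactly $2$, or at least $3$ (so $S$ is exactly the set of vertices of degree $\le 1$); since $S$ and $P$ are disjoint with $P\subseteq V_2$, we have $|S|+|P|=t$. The first step is to understand $G[V_2]$, which is a disjoint union of paths and cycles. Each cycle component of $G[V_2]$ is a whole connected component of $G$, hence has length at least the girth, i.e.\ at least $2q+2$, and every vertex of such a cycle lies on a path of $2q+1$ vertices all of degree $2$, so it belongs to $P$; likewise a path component of $G[V_2]$ with at least $2q+1$ vertices lies entirely in $P$ (a ``long chain''). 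The remaining path components (``short chains'') have at most $2q$ vertices each, and — this is precisely where the girth hypothesis is used — they account for \emph{all} of $V_2\setminus P$, since there are no small all-degree-$2$ cycles escaping $P$. Letting $a=|V_{\ge3}|$ and letting $m$ be the number of short chains, disjointness gives $2qm\ge|V_2\setminus P|=n-a-|S|-|P|=n-a-t$.

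The crux is to estimate $e(V_2,V_{\ge3})$, the number of edges between $V_2$ and $V_{\ge3}$, from both sides (this is essentially a double-counting argument in place of discharging). Each chain has exactly two ``ends'', namely edges joining an endpoint of the chain to a vertex of $V_{\ge3}\cup S$, and conversely every $V_2$–$(V_{\ge3}\cup S)$ edge is such an end; since each vertex of $S$ is incident to at most one edge, at most $|S|\le t$ ends reach $S$, so $e(V_2,V_{\ge3})\ge 2m-t$. On the other hand $\sum_{w\in V_{\ge3}}d(w)=2|E|-2|V_2|-\sum_{v\in S}d(v)\le 2|E|-2|V_2|=2|E|-2n+2a+2|S|$, and assuming $2|E|\le(2+\tfrac1{40q})n$ this gives $e(V_2,V_{\ge3})\le\tfrac1{40q}n+2a+2t$. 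Feeding $m\ge(n-a-t)/2q$ into $2m-t\le\tfrac1{40q}n+2a+2t$ and simplifying, using $t<\tfrac1{40q}n$ and $\tfrac{3q+1}{40q}\le\tfrac1{10}$ for $q\ge1$, yields $a(2q+1)>\tfrac78 n$, that is, $a>\tfrac{7n}{8(2q+1)}$.

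Finally, since every vertex of $V_{\ge3}$ has degree at least $3$ and $|V_2|=n-a-|S|\ge n-a-t$, we obtain $2|E|=\sum_v d(v)\ge 3a+2|V_2|\ge 2n+a-2t>2n+\tfrac{7n}{8(2q+1)}-\tfrac{n}{20q}$, and the elementary inequality $\tfrac{7}{8(2q+1)}-\tfrac1{20q}>\tfrac1{40q}$ (equivalently $232q>24$), valid for all $q\ge1$, forces $\ad(G)=2|E|/n>2+\tfrac1{40q}$, a contradiction; hence $|S\cup P|\ge\tfrac1{40q}n$. I expect the only delicate part to be the bookkeeping rather than any single hard estimate: one must invoke the girth bound exactly where cycle components of $G[V_2]$ would otherwise fail to lie in $P$, and handle carefully the components consisting only of low-degree vertices (isolated vertices, copies of $K_2$, short paths with degree-$1$ endpoints) so that the ``two ends per chain'' count and the bound on ends reaching $S$ remain valid. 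The numerical constant $\tfrac1{40q}$ then drops out with a little room to spare.
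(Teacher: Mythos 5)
Your proof is correct. Every step checks out: cycle components of $G[V_2]$ are full components of $G$, hence have at least $2q+2$ vertices by the girth hypothesis, so they lie in $P$; short path components of $G[V_2]$ are exactly $V_2\setminus P$ and each has at most $2q$ vertices, giving $|V_2\setminus P|\le 2qm$; the two-sided estimate on the cut $e(V_2,V_{\ge3})$ (lower bound $2m-|S|$ by counting chain ends minus those absorbed by degree-$\le1$ vertices, upper bound $\tfrac{n}{40q}+2a+2t$ from the degree-sum of $V_{\ge 3}$) yields $a(2q+1)>\tfrac{7n}{8}$ under the contradiction hypothesis $t<\tfrac{n}{40q}$, and the final degree count $2|E|\ge 2n+a-2t$ together with $232q>24$ gives $\ad(G)>2+\tfrac1{40q}$, a contradiction. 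I also verified the arithmetic: $\tfrac{3q+1}{40q}\le\tfrac1{10}$, $\tfrac{39}{40}-\tfrac1{10}=\tfrac{7}{8}$, and $\tfrac{7}{8(2q+1)}>\tfrac{3}{40q}$ reduces to $232q>24$.

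Your route is genuinely different from the paper's, though driven by the same underlying combinatorial idea (degree-$2$ chains of length $<2q+1$ are few because the graph has small average degree). The paper constructs an auxiliary multigraph $H$ by deleting degree-$2$ cycle components and suppressing degree-$2$ vertices, then bounds $m_H$ directly via a chain of inequalities in $n_0,n_1,n_2,n_3^+$ (handling the case $n_0+n_1\ge\epsilon n$ separately, so there is no contradiction framework), and concludes $|P|\ge n_2-2qm_H\ge n/40$. You instead argue by contradiction and double-count the cut $e(V_2,V_{\ge3})$ in place of the suppressed-graph edge count. Your version avoids introducing $H$ altogether and is arguably more elementary; the paper's version produces slightly more structural information along the way (explicit bounds like $n_0+n_1\le3\epsilon n_2$, $n_3^+\le9\epsilon n_2$) which is not needed for this lemma but fits the style of the other discharging arguments in the section. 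Both fit comfortably inside the constant $\tfrac1{40q}$.
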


\begin{proof}
Let $\epsilon=\tfrac1{40q}$. For $i=0,1,2$, let $V_i$ be the set of vertices of degree $i$, and let
$V_3^+$ be the set of vertices of degree at
least 3. We denote by $n_0$, $n_1$, $n_2$, and $n_3^+$ the cardinality
of these four sets. 
Since $G$ has average degree at most $2+\epsilon$, we
have \[n_1+2n_2+ 3n_3^+\le n_1+2n_2+\sum_{v\in V_3^+}d_G(v)\le
(2+\epsilon)(n_0+n_1+n_2+n_3^+),\] and thus $n_3^+\le
\tfrac{2+\epsilon}{1-\epsilon} \cdot
n_0+\tfrac{1+\epsilon}{1-\epsilon}\cdot
n_1+\tfrac{\epsilon}{1-\epsilon} \cdot n_2\le \tfrac52 \,n_0+\tfrac32\,n_1+\tfrac32\epsilon\,n_2$ (since $\epsilon\le \tfrac18$).

\smallskip

Let $H$ be the multigraph obtained from $G$ by removing all connected
components isomorphic to a cycle, and then replacing each
maximal path of vertices of degree 2 in $G$ by a single edge (i.e., for
each maximal path $P$ in which all internal vertices have degree 2 in
$G$, we delete these internal vertices and add an edge between the two 
endpoints of $P$). Note
that $H$ has no vertices of degree 2, and it contains
precisely $n_0+n_1+n_3^+$ vertices. Observe also that the number $m_H$ of edges
of $H$ is precisely $\tfrac12 \sum_{v\in V_0\cup V_1 \cup V_3^+}
d_G(v)$. It thus follows from the inequalities above that
\begin{eqnarray*}
  m_H &\le &\tfrac12(2+\epsilon)(n_0+n_1+n_2+n_3^+)-n_2\\
  & = & (1+\tfrac{\epsilon}2)(n_0+n_1+n_3^+)+\tfrac{\epsilon}2 \cdot
        n_2\\
    &\le & \tfrac54\,n_0+\tfrac54\,n_1+\tfrac54(\tfrac52\,n_0+\tfrac32\,n_1+\tfrac32\epsilon\,n_2)+\tfrac{\epsilon}2 \cdot
           n_2.\\
  &\le & 5(n_0+n_1)+3\epsilon\, n_2.
\end{eqnarray*}

We now set $S=V_0\cup V_1$. Thus,
if $|S|=n_0+n_1\ge \tfrac1{40q}\,n= \epsilon \,n$ we have sets $S$ and
$P=\emptyset$ satisfying all required properties. Hence, we
can assume in the remainder of the proof that
\begin{eqnarray*}
  n_0+n_1 \le \epsilon\,n & \le &  \epsilon \,(n_0+n_1+n_2+n_3^+)\\
  & \le & \epsilon \,(n_0+n_1+n_2+\tfrac52
          (n_0+n_1)+\tfrac32\epsilon\,n_2)\\
  & \le & (n_0+n_1)\tfrac72\,\epsilon +
          n_2\,\epsilon(1+\tfrac32\,\epsilon)\\
          & \le & (n_0+n_1)\tfrac72\,\epsilon +2
          n_2\,\epsilon
\end{eqnarray*}
It follows that $$n_0+n_1\le \frac{1}{1-\tfrac72\,\epsilon} \cdot 2
          n_2\,\epsilon\le 3\epsilon \,n_2,$$ since $\epsilon\le
          \tfrac1{12}$. 
This implies that $n_3^+\le 9\epsilon\, n_2$ since otherwise the average
degree would be larger than $(2+\epsilon)$.  Consequently, we obtain $n\le (1+12\epsilon)\,n_2$, and $m_H\le 18\epsilon \,n_2$.

          \smallskip

In $G$,
remove all the vertices of degree at most 1 and at least 3. We are
left with $m_H$ (possibly empty) paths $P_1,P_2, \ldots, P_{m_H}$ of
vertices of degree 2 in $G$, each corresponding to an edge of $H$
(each edge of $H$ is either a path in $G$ of vertices of degree two, or a real edge of $G$ in
which case the corresponding path is empty),
plus a certain number of cycles (consisting of vertices of degree 2 in
$G$). Since $G$ has no connected component isomorphic to a cycle of
length less than $2q+2$, each vertex of such a cycle
is included in a path consisting of at least $2q+1$ vertices of degree
2 in $G$, so all these vertices can be added to the set $P$. We also
add to $P$ all the paths $P_i$ ($1\le i \le m_H$) containing at least
$2q+1$ vertices. As a consequence, the set $P$ contains all the vertices of
degree 2 in $G$, except those which only belong to
paths $P_i$ of at most $2q$ vertices.
So we have
$|P|\ge n_2-2qm_H$.  
By the inequalities above, we
have $$n_2-2qm_H \ge n_2(1-18\epsilon \cdot 2q) \ge \frac{1-2\cdot 18\epsilon q}{1+12\epsilon}\cdot n\ge
\tfrac{n}{40},$$ where the last inequality
follows from $\epsilon= \tfrac1{40q}$. Since $\tfrac{n}{40}\ge
\epsilon\,n$, the set $P$ contains at least $\epsilon\,n$ vertices, as desired.
\end{proof}

The \emph{girth} of a graph $G$ is the length of a shortest cycle in
$G$ (if the graph is acyclic we set its girth to $+\infty$).
We recall that the \emph{maximum average degree} of a graph $G$, denoted by
$\mad(G)$, is the maximum of the average degrees of the
subgraphs of $G$.
We now explain how to apply  Lemma~\ref{lem:mad2} to design a
distributed algorithm for $(2q+1\!:\!q)$-coloring. Note that
Theorem~\ref{thm:distrmad2}, which we restate below for convenience,  requires that the
\emph{maximum} average degree of $G$ is close to 2 and the girth is at
least $2q+2$, while the previous
result only required that the average degree is close to 2, and there
is no component isomorphic to a cycle of length less than $2q+2$. As
observed by a reviewer, there are similarities between our approach and the \emph{rake and
compress} technique of Miller and Reif~\cite{MR89} (see
also~\cite{B+S19} and~\cite{CP19} for distributed algorithms in trees
using this technique).



\thmdistmad*

\begin{proof}
  The algorithm proceeds similarly as
in~\cite{GPS88}. We set $G_0:=G$ and for $i=1$ to $\ell=O(\log n)$ we
define $S_{i-1}\cup P_{i-1}$ as the set of vertices of degree at most 2 given by
applying Lemma~\ref{lem:mad2} to $G_{i-1}$ (which has average degree
at most $2+\tfrac1{40q}$ since $\mad(G)\le 2+\tfrac1{40q}$), and set
$G_i:= G_{i-1}-(S_{i-1}\cup P_{i-1})$. Note that each $S_{i}\cup P_{i}$ consists of a set of
vertices of $V(G_i)$ of
size at least $\tfrac1{40q}|V(G_i)|$, and in
particular we can choose $\ell=O(\log n)$ such that $G_\ell$ is
empty. Note that the induced subgraph $G[S_i\cup
P_i]=G_i[S_i\cup P_i]$ consists of isolated vertices and edges, paths consisting of at least $2q+1$ vertices, all of degree
2 in
$G_i$, and cycles consisting of at least $2q+2$ vertices, all of degree
2 in $G_i$. 

Note that each $S_i\cup P_i$ can be computed in $O(q)$ rounds (each
vertex only needs to look at its neighborhood at distance at most
$2q+1$), and thus the decomposition of $G$ into
$S_1,P_1,\ldots,S_\ell,P_\ell$ (and the sequence of graphs $G_1,\ldots,G_\ell$)
can be computed in $O(q\log n)$ rounds.

For each $1\le i \le \ell$, \emph{in parallel}, compute a
maximal set $I_i$ of vertices at pairwise distance at least $2q+1$ in $G[S_i\cup
P_i]$. Recall that the vertices of $S_i$ have degree
at most $1$ in $G_i$, so they induce isolated vertices or isolated
edges in $G_i$ (and $G$), while $P_i$ induces a disjoint union of cycles of length at least
$2q+2$ and paths of at least $2q+1$ vertices, each consisting only of vertices
of degree 2 in $G_i$. In particular, by maximality of $I_i$, the set $P_i-I_i$ induces a collection of
disjoint (and pairwise non-adjacent)
paths of at least $2q+1$ and at most
$4q+2$ vertices (except the first and last segment of each path of $P_i$, which
might contain fewer vertices). For each path of $P_i$, discard from $I_i$
the first and last vertex of $I_i$ in the path (these two vertices might
coincide if a path of $P_i$ contains a single vertex of $I_i$), and call $I_i'$ the
resulting subset of $I_i$ (note that each vertex $x\in I_i$ can check in
$O(q)$ rounds if it belongs to $I_i'$ by inspecting the lengths of the
two subpaths of vertices of degree $2$ adjacent to $x$, if any of them is smaller than $2q+1$
then $x\in I_i'$). By maximality of $I_i$ and the definition
of $I_ i'$, the set $P_i-I_i'$ induces a collection of
disjoint (and pairwise non-adjacent)
paths of length at least $2q+1$ and at most
$8q+2$.
Note that each graph $G[S_i\cup
P_i]$ has maximum degree at most $2$, so a maximal set $I_i$ of
vertices at pairwise distance at least $2q+1$ can be computed in $O(q+\log^* n)$
rounds in the $(2q+1)$-th power of $G[S_i\cup
P_i]$~\cite{BEK14}, and thus in $O(q^2+q\log^* n)$
rounds in $G$. Since the computation of the sets $I_i$ is made in
parallel in each $G[S_i\cup
P_i]$, this step takes $O(q^2+\log^* n)$ rounds.

We now color each $S_i\cup P_i$ in reverse order, i.e., from
$i=\ell-1$ to 0. For the components induced by $S_i$ this can be
done greedily, since the vertices have degree at most 1 in $G_i$, they have at most one
colored neighbor and thus at most $q$ forbidden colors (and at least
$q+1$ available colors). For the components induced by $P_i$, we start
by coloring $I_i'$ arbitrarily, and then extend the coloring greedily
to $P_i-I_i'$
until each path of uncolored vertices has
size precisely $2q+1$ (this can be done in $O(q)$ rounds). We then use Theorem~\ref{thm:pathchoos} (each
endpoint of an uncolored path has a list of at least $2q+1-q\ge q+1$
available colors).
Each coloring extension takes $O(q)$ rounds, so overall this part takes
$O(q\log n)$ rounds. It follows that the
overall round complexity is $O(q^2 + q\log n)$, as desired.
\end{proof}

This immediately implies the following.

\begin{corollary}\label{cor:tree}
Let $G$ be an $n$-vertex tree. Then for any fixed $q\ge 1$, a $(2q+1\!:\!q)$-coloring of $G$ can be computed in
$O(q\log n+q^2)$ rounds.
\end{corollary}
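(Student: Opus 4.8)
The plan is to derive this directly from Theorem~\ref{thm:distrmad2}, since a tree trivially satisfies both of its hypotheses. First I would observe that a tree is acyclic, so its girth is $+\infty$, which is in particular at least $2q+2$ for every $q\ge 1$. Hence the girth condition is met with no assumption on $q$ at all.

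Second I would check the density condition $\mad(G)\le 2+\tfrac1{40q}$. Every subgraph of a tree is a forest, and a forest on $k\ge 1$ vertices has at most $k-1$ edges, so its average degree is at most $\tfrac{2(k-1)}{k}<2\le 2+\tfrac1{40q}$; taking the maximum over all (nonempty) subgraphs gives $\mad(G)<2+\tfrac1{40q}$. This is the only inequality to verify, and it is immediate, so there is no real obstacle here.

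With both hypotheses of Theorem~\ref{thm:distrmad2} in hand, I would simply invoke it: it produces a $(2q+1\!:\!q)$-coloring of $G$ deterministically in $O(q\log n+q^2)$ rounds in the \LOC model, which is exactly the claimed bound. I would then remark that the same argument applies verbatim to forests (and, as noted after Theorem~\ref{thm:distrmad2}, to graphs of large enough girth in any fixed minor-closed class), but for the statement of the corollary nothing further is needed.
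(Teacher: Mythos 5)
Your proposal is exactly the paper's argument: the corollary is stated as an immediate consequence of Theorem~\ref{thm:distrmad2}, and you correctly verify that a tree satisfies both hypotheses trivially (infinite girth and $\mad < 2$). Nothing is missing.
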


Note that given any $(2q+1\!:\!q)$-coloring, we can deduce a $(q+2)$-coloring
in a single round (each vertex chooses the smallest color in its set
of $q$ colors given by the $(2q+1\!:\!q)$-coloring), while coloring trees with a constant number of colors
takes $\Omega(\log n)$ rounds~\cite{L92}, so the round complexity in
Corollary~\ref{cor:tree} is best possible.

\medskip

For $k\ge 1$, a graph $G$ is \emph{$k$-path-degenerate} if any non-empty subgraph $H$ of $G$
contains a vertex of degree at most 1, or a path consisting of $k$
vertices of degree 2 in $H$.

\begin{lemma}\label{lem:ppd}
If $G$ is $k$-path-degenerate, then $\mad(G)\le 2+\tfrac2k$.
\end{lemma}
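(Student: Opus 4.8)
The plan is to prove the contrapositive: if $\mad(G) > 2 + \tfrac2k$, then $G$ has a non-empty subgraph $H$ with no vertex of degree at most $1$ and no path of $k$ consecutive degree-$2$ vertices, contradicting $k$-path-degeneracy. So take a subgraph $H$ of $G$ achieving $\ad(H) = \mad(G) > 2 + \tfrac2k$; among all such maximum-density subgraphs pick one (say minimal, or just any). The standard first observation is that a maximum-density subgraph has minimum degree at least $2$: deleting a vertex of degree $\le 1$ does not decrease the average degree (in fact increases it if the density exceeds $2$), so we may assume $\delta(H) \ge 2$.

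Next I would handle the long chains of degree-$2$ vertices. Suppose $H$ contains a path $v_1, v_2, \ldots, v_k$ of $k$ vertices each of degree exactly $2$ in $H$. The idea is that contracting or deleting such a chain should not lower the density. Concretely, consider replacing the whole maximal chain of degree-$2$ vertices through these by a single edge (or deleting the internal vertices and possibly the edge): each such vertex contributes exactly $2$ to the sum of degrees and $1$ to the vertex count, so removing a degree-$2$ vertex from a graph of average degree $> 2$ strictly increases the average degree. Hence if $H$ has average degree $> 2 + \tfrac2k$ and contains a run of $k$ degree-$2$ vertices, deleting one of them yields a smaller subgraph of strictly larger density — but this does not immediately give a subgraph with \emph{no} long chain, since deleting one vertex from the middle of a long chain breaks it but may leave two shorter chains, and iterating must be done carefully to preserve density bookkeeping.

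The cleanest route, I expect, is a direct counting argument rather than an extremal/iterative one. Assume for contradiction $G$ is $k$-path-degenerate but $\mad(G) > 2 + \tfrac2k$, and let $H$ be a subgraph with $\ad(H) > 2 + \tfrac2k$; WLOG $\delta(H) \ge 2$ by the deletion observation above. By $k$-path-degeneracy applied to $H$, since $H$ has no vertex of degree $\le 1$, it contains a path $Q$ of $k$ vertices of degree $2$ in $H$. Now I would set up the count: let $n_2 = |V_2(H)|$ and let the vertices of degree $\ge 3$ number $n_{\ge 3}$, with $n = n_2 + n_{\ge 3}$. From $2|E(H)| \ge 2n_2 + 3 n_{\ge 3}$ and $\ad(H) = 2|E(H)|/n > 2 + \tfrac2k$ we get $n_{\ge 3} < \tfrac2k \, n \le \tfrac2k(n_2 + n_{\ge 3})$, i.e. $n_{\ge 3}(1 - \tfrac2k) < \tfrac2k n_2$, so $n_{\ge 3} < \tfrac{2}{k-2} n_2$ (for $k \ge 3$; the cases $k = 1, 2$ are trivial since then $2 + \tfrac2k \ge 3 \ge \mad$ is false only vacuously, or handled separately). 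Deleting the high-degree vertices from $H$ leaves a disjoint union of paths and cycles on the degree-$2$ vertices; each cycle, and each path long enough, contains $k$ consecutive degree-$2$ vertices of $H$. The point is to show that \emph{not all} the degree-$2$ structure can be broken into short pieces by only $n_{\ge 3}$ high-degree vertices when $n_{\ge 3}$ is this small: the number of path-components among the degree-$2$ vertices is at most the number of endpoints incident to high-degree vertices, which is at most $\sum_{v \in V_{\ge 3}} d_H(v) = 2|E(H)| - 2 n_2$. If every such path had at most $k-1$ vertices, then $n_2 \le (k-1)\bigl(2|E(H)| - 2n_2\bigr)$, which combined with $2|E(H)| < (2+\tfrac2k) n = (2+\tfrac2k)(n_2 + n_{\ge 3})$ should yield a contradiction after substituting the bound on $n_{\ge 3}$.

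The main obstacle will be getting the inequality chain in that last step to actually close — in particular bounding the number of degree-$2$ path components correctly (it is not quite $\sum_{v \in V_{\ge 3}} d_H(v)$ because an edge between two high-degree vertices contributes no path, and a path may have both ends at the same high-degree vertex), and making sure the constant works out to exactly $2/k$ rather than something slightly weaker. I would double-check by testing the bound against the extremal example: a long subdivided graph where each edge of a $3$-regular graph is replaced by a path of $k$ internal degree-$2$ vertices has $n_{\ge 3}$ original vertices, $\tfrac32 n_{\ge 3}$ edges subdivided into paths of length $k+1$, total $n = n_{\ge 3} + \tfrac32 n_{\ge 3} k$ vertices and $m = \tfrac32 n_{\ge 3}(k+1)$ edges, giving $\ad = 2m/n = \tfrac{3(k+1)}{1 + \tfrac32 k} = \tfrac{6(k+1)}{2 + 3k} = 2 + \tfrac{2}{2+3k}$, which is comfortably below $2 + \tfrac2k$ — consistent with the claim, and suggesting the bound $2 + \tfrac2k$ is not tight but is what the clean argument gives. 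If the direct count proves stubborn, the fallback is the iterative extremal argument: among subgraphs of maximum density, take one minimizing $|V|$, argue $\delta \ge 2$ and that it has no $k$-chain of degree-$2$ vertices (a chain of length $k$ could be shortened while keeping density $\ge \mad$, contradicting minimality), then invoke $k$-path-degeneracy on that subgraph for the contradiction.
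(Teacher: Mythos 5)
Your main line of attack---the direct counting argument---has a gap that cannot be repaired in the form you describe. You want a contradiction from ``$\ad(H)>2+2/k$, $\delta(H)\ge 2$, and every maximal run of degree-$2$ vertices in $H$ has fewer than $k$ vertices,'' but there is no contradiction there: a $3$-regular graph has minimum degree $3$, no degree-$2$ vertices at all (so vacuously only short runs), and average degree $3>2+2/k$. The local structure of a single subgraph $H$ does not bound its density from above; $k$-path-degeneracy is a hypothesis about \emph{all} subgraphs, and the proof has to use it repeatedly (or, equivalently, via an extremal choice). Your write-up also trips over its own logic: you invoke $k$-path-degeneracy to get a $k$-chain $Q$ in $H$, and then try to argue that all degree-$2$ runs in $H$ are short---but $Q$ already shows they are not.

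Your fallback, by contrast, is correct and is essentially the paper's proof, just run in contrapositive/extremal form. The paper inducts on $|V(H)|$: for any subgraph $H$, $k$-path-degeneracy yields a vertex of degree at most $1$ (delete it, losing at most one edge) or a path of $k$ degree-$2$ vertices (delete it, losing at most $k+1$ edges), and the inductive bound on the smaller graph then gives $\ad(H)\le 2+\tfrac2k$. Your framing---take a subgraph of maximum density and minimum order, observe that deleting a degree-$\le 1$ vertex or a $k$-chain would not decrease the density (strictly increases it once $\ad>2$, resp.\ $\ad>2+\tfrac2k$), conclude the subgraph has $\delta\ge 2$ and no $k$-chain, and contradict $k$-path-degeneracy---is the same computation reorganized. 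If you go that route, the only two facts you need to verify are precisely the two density comparisons that appear in the paper's inductive step, and both are immediate.
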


\begin{proof}
Let $H$ be a subgraph of $G$. Let $n$ and $m$ be the number of
vertices and edges of $H$. We prove that $\ad(H)=2m/n\le 2+\tfrac2k$ by
induction on $n$. If $H$ is empty, then the result
is trivial, so assume that $n\ge 1$. Since $G$ is
$k$-path-degenerate, $H$ contains a vertex of degree at most 1 or a
path of $k$ vertices of degree 2 in $H$. Assume first that $H$
contains a vertex $v$ of degree at most 1. Then $H-v$ contains $n-1$
vertices and at least $m-1$ edges, and thus by induction
$2+\tfrac2k\ge \ad(H-v)\ge \tfrac{2m-2}{n-1}$. It follows that $2m\le
(n-1)(2+\tfrac2k)+2\le n (2+\tfrac2k)$, and thus $\ad(H)\le
2+\tfrac2k$, as desired. Assume now that $H$ contains a path $P$ of
$k$ vertices of degree 2 in $H$. Then $H-P$ contains $n-k$ vertices
and $m-k-1$ edges, and by induction $2+\tfrac2k\ge \ad(H-P)=
\tfrac{2m-2k-2}{n-k}$. It follows that $2m\le
(n-k)(2+\tfrac2k)+2k+2\le n(2+\tfrac2k)$, thus $\ad(H)\le
2+\tfrac2k$, as desired.
\end{proof}

It was proved by Gallucio, Goddyn, and Hell~\cite{GGH01} that if
$\mathcal{C}$ is a proper minor-closed class, or a class closed under
taking topological minors, then for any $k\ge 1$ there is a girth
$g(k)$ such that any graph $G\in \mathcal{C}$ with girth at least
$g(k)$ is $k$-path-degenerate. Using Lemma~\ref{lem:ppd} and
Theorem~\ref{thm:distrmad2}, this immediately implies the following.

\begin{corollary}\label{cor:minor}
  For any integer $q\ge 1$  and any proper class $\mathcal{C}$ that is closed under taking
  minors or topological minors, there is an integer $g$ such
that any $n$-vertex graph $G\in \mathcal{C}$ of girth at least $g$ can be $(2q+1\!:\!q)$-colored in $O(\log n)$ rounds.
\end{corollary}



\begin{acknowledgement} 
We thank the anonymous reviewers for their detailed
comments and suggestions.
\end{acknowledgement}

\end{document}